\theoremstyle{definition}
\newtheorem{theorem}{Theorem}[section]
\newtheorem{definition}{Definition}[section]
\newtheorem{lemma}[theorem]{Lemma}
\newtheorem{proposition}[theorem]{Proposition}
\newtheorem{remark}[theorem]{Remark}
\def\@seccntformat#1{\@ifundefined{#1@cntformat}%
	{\csname the#1\endcsname\quad}%      default
	{\csname #1@cntformat\endcsname}%    enable individual control
}
\newif\ifShowComments
\def\strutdepth{\dp\strutbox}
\def\druk#1{\strut\vadjust{\kern-\strutdepth
        {\vtop to \strutdepth{%
                \baselineskip\strutdepth\vss
                        \llap{\hbox{#1}\quad}\null}}}}
\title{\bf
%Gini coefficient estimator bias for geometric distributions
%
On the bias of the Gini coefficient estimator for zero-truncated Poisson distributions
}
\author{
\text{Roberto Vila}$^{1}$\thanks{Corresponding author: Roberto Vila, email: {rovig161@gmail.com}
%\newline
%%{\it Preprint submitted to Annals of the Institute of Statistical Mathematics on \today}
}
\,\,\,and
\text{Helton Saulo}$^{1,2}$
\\
{\small $^{1}$ Department of Statistics, University of Brasilia, Brasilia, Brazil}\\
{\small $^{2}$ Department of Economics, Federal University of Pelotas, Pelotas, Brazil}\\
}
\begin{document}
	\maketitle 	
	\begin{abstract}
		This paper analyzes the Gini coefficient estimator for zero-truncated Poisson populations, revealing the presence of bias, and provides a mathematical expression for the bias, along with a bias-corrected estimator, which is evaluated using Monte Carlo simulation methods.
		%
%This paper examines the Gini coefficient estimator for zero‐truncated Poisson populations, revealing bias. 
%We derive a bias expression and propose a bias-corrected estimator, which is evaluated through Monte Carlo simulations.
	\end{abstract}
	\smallskip
	\noindent
	{\small {\bfseries Keywords.} {Zero‐truncated Poisson distribution, Gini coefficient estimator, biased estimator.}}
	\\
	{\small{\bfseries Mathematics Subject Classification (2010).} {MSC 60E05 $\cdot$ MSC 62Exx $\cdot$ MSC 62Fxx.}}
%	
%%	\tableofcontents

\section{Introduction}

The Gini coefficient is a well-established measure of inequality and dispersion, widely applied in areas such as economics, ecology, and epidemiology \citep{YAO01101999, damgaard2000, ELIAZAR2014148, Abeles2020, kharazmi2023}. In epidemiology, the Gini coefficient can be used to quantify heterogeneity in the offspring distribution of infectious diseases, making it particularly suitable for analyzing superspreading events, in which a small number of individuals are responsible for a substantial share of secondary infections; see \cite{hoehle2020}.

The zero-truncated Poisson (ZTP) distribution provides a natural probabilistic model for the number of secondary infections caused by an infected individual, given that at least one secondary infection has occurred. In this sense, the Gini coefficient for ZTP populations can be an useful tool for understanding epidemic dynamics and designing effective control strategies. Despite its usefulness, the Gini coefficient estimator for ZTP populations is prone to bias, particularly in small-sample scenarios common in contact tracing or cluster-based studies. This raises important methodological concerns regarding the reliability and correction of Gini-based heterogeneity measures in discrete, truncated settings.

In this paper, we analyze the bias of the sample Gini coefficient when the underlying population follows a ZTP distribution. Our contributions are twofold: (i) we derive a simple expression for the expected value of the Gini coefficient estimator under the ZTP model, and (ii) we propose a bias-corrected version of the estimator. The performance of the corrected estimator is assessed through Monte Carlo simulation. By improving the accuracy of inequality measurement in zero-truncated count data, our findings contribute to more reliable quantification of heterogeneity in a variety of domains, particularly this work provides tools that enhance the quantitative assessment of transmission heterogeneity, which is useful in applications involving superspreading phenomena.

The rest of this paper is organized as follows.  In Section \ref{sec:02}, we present the theoretical basis and key definitions.  In Section \ref{sec:03}, we obtain a closed-form expression for the expected value of the sample Gini coefficient estimator. In Section \ref{sec:04}, we examine an illustrative Monte Carlo simulation study.  Finally, in Section \ref{sec:05}, we provide some concluding remarks.

\section{Gini coefficient for ZTP distributions}\label{sec:02}

In this section we define the zero‐truncated Poisson distribution and some of its main characteristics, in order to then explicitly determine its Gini coefficient.

\subsection{ZTP distribution}

A discrete random variable $X$ has a zero‐truncated Poisson (ZTP) distribution \citep{Johnson2005} with parameter $\lambda>0$, denoted by $X\sim\text{ZTP}(\lambda)$, if its probability mass function  is given by
\begin{align}\label{pmf-def}
P_{\lambda}(k)
=
\mathbb{P}(X=k)
=
{1\over 1-\exp(-\lambda)} \, 
{\frac{\exp(-\lambda) \lambda^{k}}{k!}},
\quad 
k = 1, 2,\ldots.
\end{align}
It is well-known that the cumulative distribution function and
the expected value of $X\sim\text{ZTP}(\lambda)$, denoted by $F_\lambda(x)$ and $\mu=\mathbb{E}(X)$, respectively, are given by
\begin{align}\label{cdf-expect}
	F_\lambda(x)
	=
	\begin{cases}
	1
	-
	\displaystyle
	{1\over 1-\exp(-\lambda)} \, 
	{\gamma(\lfloor x\rfloor+1, \lambda)\over \Gamma(\lfloor x\rfloor+1)}, & x\geqslant 1,
	\\
	0, & x<1, 
	\end{cases}
	\quad
	\text{and}
	\quad 
\mu=\frac{\lambda}{1-\exp(-\lambda)},
\end{align}
where $\lfloor x\rfloor$ is the floor function, $\Gamma(x)$ is the (complete) gamma function and $\gamma(x,y)$ is the lower incomplete gamma function.

\subsection{Gini coefficient}
The Gini coefficient assesses income inequality via the Lorenz curve, representing cumulative income share versus population percentage. It is calculated as the ratio of the area between the Lorenz curve and the line of equality to the total area under the line. Our analysis employs the conventional mean-difference-based definition
	\begin{definition}
The Gini coefficient \citep{Gini:1936} of a random variable $X$ with finite mean $\mathbb{E}(X)$ is defined as
	\begin{align}\label{Gini-coefficient}
		G={1\over 2}\, {\mathbb{E}(\vert X_1-X_2\vert)\over\mathbb{E}(X)},
	\end{align}
	where $X_1$ and $X_2$ are independent copies of $X$.
\end{definition}

%	\begin{proposition}[Vila and Saulo (2025)]\label{corollary-2}
%	The Gini coefficient of a random variable $X$ with  cumulative distribution function $F(\cdot)$ can be written as
%	\begin{align*}
%		G
%		=
%		2\mathbb{P}(X<X^*)
%		-
%		1
%		+
%		\mathbb{P}(X=X^*),
%	\end{align*}
%	where 
%%	$F_{X/(X+X^*)}$ is the cumulative distribution function of the independent ratio ${X/(X+X^*)}$ 
%	$X$ and $X^*$ are independent, and
%	$X^*$ has length-biased
%	distribution, that is, its cumulative distribution function is given by 
%	\begin{align}\label{prop-def}
%		F_{X^*}(u)={1\over\mathbb{E}(X)}\, \int_{0}^{u} t {\rm d}F(t), \quad u>0.
%	\end{align}
%\end{proposition}

\begin{proposition}\label{prop-gini-coeff}
	The Gini coefficient for $X\sim\text{ZTP}(\lambda)$ is given by
\begin{align}\label{gini:pop}
	G
	=
	1
	-
{2\exp(-\lambda)\over 1-\exp(-\lambda)} \,
\int_0^\lambda I_0(2\sqrt{\lambda t}) \exp(-t) {\rm d}t
+
	{\exp(-2\lambda)\over 1-\exp(-\lambda)} \,
I_1(2\lambda),
\end{align}
where $I_\nu(z)=(z/2)^\nu \sum_{k=0}^\infty (z/2)^{2k}/[k!\Gamma(\nu+k+1)]$	is the modified Bessel function of the first kind of order $\nu$.
\end{proposition}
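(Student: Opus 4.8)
The plan is to reduce the Gini coefficient to the mean absolute difference $\mathbb{E}(|X_1-X_2|)$, to evaluate the resulting double series over the $\text{ZTP}(\lambda)$ probability mass function in terms of modified Bessel functions, and finally to recast the closed form as the stated integral representation by means of one auxiliary identity.

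\emph{Step 1 (reduction to a double series).} By \eqref{Gini-coefficient} together with the value $\mu=\mathbb{E}(X)=\lambda/(1-\exp(-\lambda))$ from \eqref{cdf-expect}, it suffices to compute $\mathbb{E}(|X_1-X_2|)$ for independent $X_1,X_2\sim\text{ZTP}(\lambda)$. Substituting the probability mass function \eqref{pmf-def},
\[
\mathbb{E}(|X_1-X_2|)
=
\frac{\exp(-2\lambda)}{(1-\exp(-\lambda))^{2}}
\sum_{i,j\geqslant 1}|i-j|\,\frac{\lambda^{i+j}}{i!\,j!}.
\]
Using the symmetry of the summand in $(i,j)$ and the substitution $i=j+r$ with $r\geqslant 1$, the double sum becomes $2\sum_{r\geqslant 1} r\sum_{j\geqslant 1}\lambda^{2j+r}/[(j+r)!\,j!]$; by the series for $I_\nu$ given in the proposition the inner sum equals $I_r(2\lambda)-\lambda^{r}/r!$, so the double sum is $2\sum_{r\geqslant 1} r\bigl(I_r(2\lambda)-\lambda^{r}/r!\bigr)$.

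\emph{Step 2 (summing the Bessel series).} The recurrence $I_{r-1}(z)-I_{r+1}(z)=(2r/z)I_r(z)$ yields $rI_r(2\lambda)=\lambda\bigl(I_{r-1}(2\lambda)-I_{r+1}(2\lambda)\bigr)$, so $\sum_{r\geqslant 1} rI_r(2\lambda)$ telescopes to $\lambda\bigl(I_0(2\lambda)+I_1(2\lambda)\bigr)$, the tail $\lambda\bigl(I_N(2\lambda)+I_{N+1}(2\lambda)\bigr)$ tending to $0$ since $I_r(2\lambda)\sim\lambda^{r}/r!$. Combined with $\sum_{r\geqslant 1} r\lambda^{r}/r!=\lambda\exp(\lambda)$, this gives
\[
\mathbb{E}(|X_1-X_2|)
=
\frac{2\lambda\exp(-2\lambda)\bigl(I_0(2\lambda)+I_1(2\lambda)\bigr)-2\lambda\exp(-\lambda)}{(1-\exp(-\lambda))^{2}},
\]
whence, dividing by $2\mu$,
\[
G
=
\frac{\exp(-2\lambda)\bigl(I_0(2\lambda)+I_1(2\lambda)\bigr)-\exp(-\lambda)}{1-\exp(-\lambda)}.
\]

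\emph{Step 3 (passing to the integral form).} It remains to establish $\int_0^{\lambda} I_0(2\sqrt{\lambda t})\exp(-t)\,{\rm d}t=\tfrac12\bigl(\exp(\lambda)-\exp(-\lambda)I_0(2\lambda)\bigr)$. Expanding $I_0(2\sqrt{\lambda t})=\sum_{m\geqslant 0}(\lambda t)^{m}/(m!)^{2}$ and integrating term by term gives $\sum_{m\geqslant 0}\lambda^{m}\gamma(m+1,\lambda)/(m!)^{2}$; inserting $\gamma(m+1,\lambda)=m!\bigl(1-\exp(-\lambda)\sum_{j=0}^{m}\lambda^{j}/j!\bigr)$ and using the symmetrisation $\sum_{m\geqslant 0}\sum_{j=0}^{m}\lambda^{m+j}/(m!\,j!)=\tfrac12\bigl(\exp(2\lambda)+I_0(2\lambda)\bigr)$ proves the identity. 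Finally, rewriting it as $\exp(-2\lambda)I_0(2\lambda)=1-2\exp(-\lambda)\int_0^{\lambda} I_0(2\sqrt{\lambda t})\exp(-t)\,{\rm d}t$ and substituting into the formula for $G$ above produces exactly \eqref{gini:pop}.

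All interchanges of summation with integration, as well as the reindexing of the double series, are justified by absolute convergence, since $I_r(2\lambda)$ decays superexponentially in $r$; that part is routine. I expect the main obstacle to be the Bessel evaluation in Step 2 — spotting the telescoping — together with the realisation that the ``clean'' form obtained there must still be massaged, via the auxiliary identity of Step 3, into the integral representation stated in \eqref{gini:pop}.
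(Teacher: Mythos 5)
Your proof is correct, but it follows a genuinely different route from the paper. The paper does not touch the double series for $\mathbb{E}(\vert X_1-X_2\vert)$ at all: it invokes the size-biased characterization $G=2\mathbb{P}(X<X^*)-1+\mathbb{P}(X=X^*)$ (Proposition 2.3 of Vila--Saulo), computes $\mathbb{P}(X<X^*)$ by pairing the ZTP cumulative distribution function (a ratio of incomplete gamma functions) with the size-biased mass function $P^*_\lambda(k)=\exp(-\lambda)\lambda^{k-1}/(k-1)!$, which is where the integral $\int_0^\lambda I_0(2\sqrt{\lambda t})\exp(-t)\,{\rm d}t$ arises naturally, and computes $\mathbb{P}(X=X^*)$ as the $I_1(2\lambda)$ series. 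You instead evaluate the mean difference directly: the reindexing $i=j+r$, the identification of the inner sum as $I_r(2\lambda)-\lambda^r/r!$, and the telescoping of $\sum_r rI_r(2\lambda)$ via the recurrence $I_{r-1}-I_{r+1}=(2r/z)I_r$ are all valid, and your Step 3 identity $\int_0^\lambda I_0(2\sqrt{\lambda t})\exp(-t)\,{\rm d}t=\tfrac12\left[\exp(\lambda)-\exp(-\lambda)I_0(2\lambda)\right]$ checks out (the symmetrisation giving $\tfrac12[\exp(2\lambda)+I_0(2\lambda)]$ is exactly right), so the final substitution reproduces \eqref{gini:pop}. What each approach buys: yours is elementary and self-contained (no appeal to the size-biased identity from the companion paper) and in fact delivers the arguably cleaner closed form $G=\{\exp(-2\lambda)[I_0(2\lambda)+I_1(2\lambda)]-\exp(-\lambda)\}/[1-\exp(-\lambda)]$, with the integral representation appearing only as an afterthought to match the stated formula; the paper's route is less computation here and, more importantly, sets up exactly the objects ($X^*$, $F_\lambda$, the incomplete-gamma/Bessel integrals) that are reused in the bias analysis of Section 3, which is why the authors state the result in integral form.
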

\begin{proof}
Following Proposition 2.3 in \cite{Vila-Saulo2025}, a crucial step in our analysis involves determining
\begin{align}\label{charact-gini}
G
=		
2\mathbb{P}(X<X^*)
-
1
+
\mathbb{P}(X=X^*),
\end{align}
where $X$ and $X^*$ are independent, and
$X^*$ has sized-biased
distribution, that is, its probability mass function is given by
\begin{align}\label{de-cdf-X*}
P_{\lambda}^{*}(k)
=
\mathbb{P}(X^*=k)
=
{kP_{\lambda}(k)\over \mu}
= 
{\frac{\exp(-\lambda) \lambda^{k-1}}{(k-1)!}},
\quad 
k = 1, 2,\ldots.
\end{align} 
Given the independence of the random variables $X$ and $X^*$, we can combine the results of \eqref{cdf-expect} and \eqref{de-cdf-X*} to arrive at
\begin{align}\label{pro-1}
	\mathbb{P}(X<X^*)
%	=
%	\mathbb{E}[\mathbb{P}(X<X^*\vert X^*)]
	=
	\sum_{k=1}^{\infty}
	F_\lambda(k-1)
	P_{\lambda}^{*}(k)
	&=
	\sum_{k=1}^{\infty}
	\left[
	1-
		{1\over 1-\exp(-\lambda)} \, 
	{\gamma(k, \lambda)\over \Gamma(k)}
	\right]
	P_{\lambda}^{*}(k)
	\nonumber
	\\[0,2cm]
&=
	1
	-
	{\exp(-\lambda) \over 1-\exp(-\lambda)}
	\sum_{k=1}^{\infty}
	{\gamma(k, \lambda)\over \Gamma(k)}
	\frac{\lambda^{k-1}}{(k-1)!}
		\nonumber
	\\[0,2cm]
	&=
	1-
	{\exp(-\lambda)\over 1-\exp(-\lambda)} \,
	\int_0^\lambda I_0(2\sqrt{\lambda t}) \exp(-t) {\rm d}t,
\end{align}
where the last equality utilizes the definitions of the lower incomplete gamma, $\gamma(s,x)=\int_0^x t^{s-1}\exp(-t){\rm d}t$, and the modified Bessel function of the first kind of order zero.
Analogously, we obtain,
\begin{align}\label{pro-2}
		\mathbb{P}(X=X^*)
	=
		\sum_{k=1}^{\infty}
	P_{\lambda}(k)
	P_{\lambda}^{*}(k)
	=
	{\exp(-2\lambda)\over 1-\exp(-\lambda)} 
	\sum_{k=1}^{\infty}
	\frac{\lambda^{2k-1}}{(k-1)!k!}
	=
	{\exp(-2\lambda)\over 1-\exp(-\lambda)} \,
	I_1(2\lambda),
\end{align}
where we have used the definition of the modified Bessel function of the first kind of order one.
By substituting equations \eqref{pro-1} and \eqref{pro-2} into formula \eqref{charact-gini}, the proof is readily obtained.
\end{proof}

\section{Bias of the Gini coefficient estimator}\label{sec:03}

The primary objective of this section is to obtain a concise closed-form representation for the expected value of the Gini coefficient estimator, $\widehat{G}$, initially introduced in the literature by \cite{Delta2003},
\begin{align}\label{gini-estimadtor-def}
	\widehat{G}
	=
	{1\over n-1} 
	\left[\dfrac{\displaystyle\sum_{1\leqslant i<j\leqslant n}\vert X_i-X_j\vert}{\displaystyle\sum_{i=1}^{n} X_i}\right],
	\quad 
	n\in\mathbb{N}, \, n\geqslant 2,
\end{align}
where $X_1, X_2,\ldots,X_n$ are independent, identically distributed (i.i.d.) observations from the population.

%In what follows we state the main result of Vila et al. (2025) (Theorem 1) adapted for discrete distributions, which will be used in Section 3 for the exact calculation of the Gini coefficient estimator bias under geometric distributions.
In the following, Theorem 3.1 from \cite{Vila-Saulo2025} is reformulated for zero-truncated Poisson distributions, enabling the calculation of the Gini coefficient estimator bias (see Subsection \ref{The Gini coefficient estimator bias}).

\begin{theorem}\citep{Vila-Saulo2025}
	\label{the-main-2}
	Let $X_1,X_2,\ldots$ be independent copies of  $X\sim\text{ZTP}(\lambda)$ and common cumulative distribution function $F_{\lambda}(x)$ and probability mass function $P_{\lambda}(x)$, given in \eqref{cdf-expect} and \eqref{pmf-def}, respectively.
	The following holds: 
	\begin{align*}
		\mathbb{E}(\widehat{G})
		=
		n
		\mathbb{E}(X)
		\left\{2R_{1}(F)-R_{\infty}(F)		+
		\mathbb{E}
		\left[g(X^*,X)\mathds{1}_{\{X=X^*\}}\right]\right\},
		\quad 
		n\in\mathbb{N}, \, n\geqslant 2,
		%	\\[0,3cm]
		%	=
		%	    n
		%	\mathbb{E}(X)
		%\left\{
		%2
		%\int_{0}^{\infty}
		%\mathbb{E}\left[
		%\exp\left(-xX^*\right)
		%H(x,X^*)
		%\right]
		%\mathscr{L}^{n-2}\{f\}(x){\rm d}x
		%%\\[0,2cm]
		%-
		%	\int_{0}^{\infty}
		%\mathscr{L}\{f_{X^*}\}(x)
		%\mathscr{L}^{n-1}\{f\}(x){\rm d}x
		%\right\},
	\end{align*}
	where $X^*$ has sized-biased distribution \eqref{de-cdf-X*}, 
	$R_{1}(F)\equiv \lim_{\varepsilon\to 1}R_{\varepsilon}(F)$, 
	$R_{\infty}(F)\equiv \lim_{\varepsilon\to \infty}R_{\varepsilon}(F)$,
	\begin{align}\label{def-R-function}
		&R_{\varepsilon}(F)
		\equiv
		\int_{0}^{\infty}
		\mathbb{E}\left[
		\exp\left(-xX^*\right)
		H(x,\varepsilon X^*)
		\right]
		\mathscr{L}^{n-2}_F(x){\rm d}x,
		\quad 
		\varepsilon>0,
		\\[0,2cm]
		&
		H(x,x^*)
%		\equiv
%		\int_{0}^{\infty}
%		\exp\left(-x u\right) 
%		\mathds{1}_{\{u<x^*\}}
%		{\rm d}F(u;p)
		=
			\sum_{k=1}^{x^*-1}
		\exp(-xk)
		P_{\lambda}(k),
		\quad 	 x>0, \ x^*=1,2,\ldots,
		\label{def-H-function}
		\\[0,2cm]
		&
		{g}(u,v)
		\equiv 
		\int_{0}^{\infty}\exp\left[-(u+v) x\right] \mathscr{L}^{n-2}_F(x){\rm d}x,
		\quad u,v>0,
		\label{def-g}
	\end{align}
	and
	$\mathscr{L}_F(y)
	=\sum_{k=1}^\infty \exp(-yk)P_{\lambda}(k)$ is the Laplace transform corresponding to distribution $F_{\lambda}(x)$.
	In the above, we are assuming that the improper integrals involved exist.
\end{theorem}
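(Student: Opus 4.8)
The plan is to follow the argument of \cite{Vila-Saulo2025} in the present discrete setting. First I would remove the reciprocal in \eqref{gini-estimadtor-def} via the elementary identity $1/S=\int_{0}^{\infty}\exp(-sS)\,{\rm d}s$ with $S=\sum_{i=1}^{n}X_i>0$, obtaining after an interchange of expectation and integral
\begin{align*}
	\mathbb{E}(\widehat{G})
	=
	{1\over n-1}\int_{0}^{\infty}\sum_{1\leqslant i<j\leqslant n}\mathbb{E}\!\left[\vert X_i-X_j\vert\exp\!\left(-s\sum_{k=1}^{n}X_k\right)\right]{\rm d}s .
\end{align*}
Since $X_1,\ldots,X_n$ are i.i.d.\ and $\mathscr{L}_F(s)=\mathbb{E}[\exp(-sX)]$, each summand equals $\mathbb{E}[\vert X_1-X_2\vert\exp(-s(X_1+X_2))]\,\mathscr{L}_F^{\,n-2}(s)$; using that there are $\binom{n}{2}$ of them and ${1\over n-1}\binom{n}{2}={n\over 2}$, this collapses to
\begin{align*}
	\mathbb{E}(\widehat{G})
	=
	{n\over 2}\int_{0}^{\infty}\mathbb{E}\big[\vert X_1-X_2\vert\exp(-s(X_1+X_2))\big]\,\mathscr{L}_F^{\,n-2}(s)\,{\rm d}s .
\end{align*}

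Next I would linearise the absolute value. Exchangeability of $X_1,X_2$ gives $\mathbb{E}[\vert X_1-X_2\vert\exp(-s(X_1+X_2))]=2\,\mathbb{E}[(X_1-X_2)\mathds{1}_{\{X_1>X_2\}}\exp(-s(X_1+X_2))]$; writing $X_1-X_2$ as a difference and applying the size-biasing identity $\mathbb{E}[X\varphi(X)]=\mathbb{E}(X)\,\mathbb{E}[\varphi(X^*)]$ (with $X^*$ as in \eqref{de-cdf-X*}) to the factor $X_1$ in the first term and to $X_2$ in the second term turns the inner expectation into
\begin{align*}
	2\,\mathbb{E}(X)\Big\{\mathbb{E}\big[\mathds{1}_{\{X^*>X\}}\exp(-s(X^*+X))\big]-\mathbb{E}\big[\mathds{1}_{\{X>X^*\}}\exp(-s(X^*+X))\big]\Big\},
\end{align*}
where now $X\sim F_\lambda$ is independent of $X^*$; this already explains the prefactor $n\,\mathbb{E}(X)$ in the claim.

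It then remains to treat the bracket through the pointwise identity $\mathds{1}_{\{X^*>X\}}-\mathds{1}_{\{X>X^*\}}=2\,\mathds{1}_{\{X^*>X\}}-1+\mathds{1}_{\{X=X^*\}}$ and to recognise the three resulting integrals. Conditioning on $X^*$ and summing $X$ over $1,\dots,X^{*}-1$ shows $\mathbb{E}[\mathds{1}_{\{X^*>X\}}\exp(-s(X^*+X))]=\mathbb{E}[\exp(-sX^*)H(s,X^*)]$ with $H$ as in \eqref{def-H-function}, so that its integral against $\mathscr{L}_F^{\,n-2}(s)$ is $R_{1}(F)$; the constant term contributes $\mathbb{E}[\exp(-s(X^*+X))]=\mathbb{E}[\exp(-sX^*)]\,\mathscr{L}_F(s)$, which is the $\varepsilon\to\infty$ limit of the integrand in \eqref{def-R-function} (as $H(s,\varepsilon X^*)\uparrow\mathscr{L}_F(s)$) and integrates to $R_{\infty}(F)$; and the atom term is $\mathbb{E}[\mathds{1}_{\{X=X^*\}}\exp(-2sX^*)]$, which by Fubini integrates to $\mathbb{E}[g(X^*,X)\mathds{1}_{\{X=X^*\}}]$ with $g$ as in \eqref{def-g} (on $\{X=X^*\}$ one has $X^*+X=2X^*$). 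Collecting ${n\over2}\cdot2\,\mathbb{E}(X)=n\,\mathbb{E}(X)$ together with $2R_{1}(F)-R_{\infty}(F)+\mathbb{E}[g(X^*,X)\mathds{1}_{\{X=X^*\}}]$ yields the asserted identity.

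The main difficulty is not combinatorial but one of legitimacy: each step above interchanges an expectation, an infinite sum over the support of the ZTP law, and the improper integral $\int_{0}^{\infty}(\cdot)\,{\rm d}s$, so one must secure absolute convergence to invoke Fubini/Tonelli — this is exactly the standing hypothesis that the improper integrals exist. Here $\mathscr{L}_F(s)<1$ for $s>0$ together with the behaviour of $\mathscr{L}_F(s)=(\exp(\lambda\exp(-s))-1)/(\exp(\lambda)-1)$ as $s\downarrow0$ and $s\uparrow\infty$ makes this routine when $n\geqslant3$, with the case $n=2$ (where $\mathscr{L}_F^{\,0}\equiv1$) needing a separate direct estimate. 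The one genuinely subtle point, absent from the continuous theory of \cite{Vila-Saulo2025}, is the bookkeeping at the atom $\{X=X^*\}$: because the ZTP law is discrete, the change of measure $X\mapsto X^*$ does not preserve the distinction between strict and weak inequalities, and this mismatch is precisely what generates the correction term $\mathbb{E}[g(X^*,X)\mathds{1}_{\{X=X^*\}}]$ as well as the need to read $R_{1}$ and $R_{\infty}$ as limits of $R_{\varepsilon}$.
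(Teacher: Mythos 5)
Your proof is correct: the identity $1/\sum_i X_i=\int_0^\infty \exp(-s\sum_i X_i)\,{\rm d}s$, the size-biasing of the factors $X_1$ and $X_2$, and the pointwise decomposition $\mathds{1}_{\{X^*>X\}}-\mathds{1}_{\{X>X^*\}}=2\mathds{1}_{\{X^*>X\}}-1+\mathds{1}_{\{X=X^*\}}$ reproduce exactly the claimed structure $n\mathbb{E}(X)\{2R_1(F)-R_\infty(F)+\mathbb{E}[g(X^*,X)\mathds{1}_{\{X=X^*\}}]\}$, with the strict-inequality, constant, and atom terms correctly identified with $R_1$, $R_\infty$ and $g$. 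The paper itself states this theorem only by citation to \cite{Vila-Saulo2025} and gives no proof to compare against, but your reconstruction is essentially the argument that result rests on (and it mirrors the paper's own use of $G=2\mathbb{P}(X<X^*)-1+\mathbb{P}(X=X^*)$ in Proposition \ref{prop-gini-coeff}), so there is nothing to flag.
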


\subsection{Technical results}

As the determination of $\mathbb{E}(\widehat{G})$ relies on the explicit evaluation of $R_1(F)$, $R_\infty(F)$, and $\mathbb{E}
\left[g(X^*,X)\mathds{1}_{\{X=X^*\}}\right]$ (see Theorem \ref{the-main-2}), this subsection presents several technical results that facilitate the computation of these quantities, as demonstrated in Theorems \ref{pro-main-1} and \ref{pro-main-2}.

\begin{proposition}\label{prop-1}
The following holds:
\begin{align*}
		H(x,x^*)
		=
	{\exp(-\lambda) \over 1-\exp(-\lambda)} 
\left\{
{\exp[\lambda\exp(-x)] \Gamma(x^*, \lambda\exp(-x))\over\Gamma(x^*)}-1
\right\},
		\quad 
	 x>0, \ x^*=1,2,\ldots,
\end{align*}
with $H(x,x^*)$ being as in \eqref{def-H-function}.
\end{proposition}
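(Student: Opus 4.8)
The plan is to substitute the definition of the ZTP probability mass function into the finite sum defining $H$, pull out the constant factor $\exp(-\lambda)/(1-\exp(-\lambda))$, and then recognize what remains as a partial sum of the exponential series. Writing $\mu\equiv\lambda\exp(-x)$, combining \eqref{def-H-function} with \eqref{pmf-def} gives
\[
H(x,x^*)
=
{\exp(-\lambda)\over 1-\exp(-\lambda)}\sum_{k=1}^{x^*-1}{(\lambda\exp(-x))^k\over k!}
=
{\exp(-\lambda)\over 1-\exp(-\lambda)}\left(\sum_{k=0}^{x^*-1}{\mu^k\over k!}-1\right),
\]
so everything reduces to evaluating the truncated exponential series $\sum_{k=0}^{x^*-1}\mu^k/k!$ in closed form.

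For this I would invoke the classical identity relating a partial sum of the exponential series to the upper incomplete gamma function: for every positive integer $n$,
\[
\sum_{k=0}^{n-1}{\mu^k\over k!}
=
{\exp(\mu)\,\Gamma(n,\mu)\over\Gamma(n)},
\qquad
\Gamma(n,\mu)=\int_{\mu}^{\infty}t^{n-1}\exp(-t)\,{\rm d}t .
\]
This follows by a short induction on $n$: the base case $n=1$ is just $\Gamma(1,\mu)=\exp(-\mu)$, and the inductive step is a single integration by parts in $\Gamma(n+1,\mu)$ together with $\Gamma(n+1)=n\,\Gamma(n)$. Applying it with $n=x^*$ and $\mu=\lambda\exp(-x)$ yields
\[
\sum_{k=0}^{x^*-1}{(\lambda\exp(-x))^k\over k!}
=
{\exp[\lambda\exp(-x)]\,\Gamma(x^*,\lambda\exp(-x))\over\Gamma(x^*)},
\]
and substituting this into the previous display produces exactly the asserted formula.

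There is essentially no substantive obstacle; the only point needing a moment's attention is the edge case $x^*=1$, where the sum in \eqref{def-H-function} is empty so $H(x,1)=0$, and one should confirm the right-hand side agrees: for $x^*=1$ one has $\Gamma(1,\lambda\exp(-x))=\exp[-\lambda\exp(-x)]$ and $\Gamma(1)=1$, so the brace equals $\exp[\lambda\exp(-x)]\exp[-\lambda\exp(-x)]-1=0$. The remainder is bookkeeping.
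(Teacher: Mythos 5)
Your proof is correct and follows essentially the same route as the paper: substitute the ZTP pmf into the finite sum, factor out $\exp(-\lambda)/[1-\exp(-\lambda)]$, and apply the identity $\sum_{k=1}^{n-1}a^k/k!=\exp(a)\Gamma(n,a)/\Gamma(n)-1$, which the paper invokes directly while you additionally sketch its inductive justification and verify the empty-sum case $x^*=1$.
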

\begin{proof}
Direct calculation yields
\begin{align*}
	H(x,x^*)
	=
	\sum_{k=1}^{x^*-1}
	\exp(-xk)
	P_\lambda(k)
	&=
		{\exp(-\lambda) \over 1-\exp(-\lambda)} 
	\sum_{k=1}^{x^*-1}
	{\frac{[\lambda\exp(-x)]^{k}}{k!}}
	\nonumber
	\\[0,2cm]
	&
	=
	{\exp(-\lambda) \over 1-\exp(-\lambda)} 
	\left\{
	{\exp[\lambda\exp(-x)] \Gamma(x^*, \lambda\exp(-x))\over\Gamma(x^*)}-1
	\right\},
	\end{align*}
	with the last equality following from the identity  $\sum_{k=1}^{n-1} a^k/k!=\exp(a) \Gamma(n, a)/\Gamma(n)-1$.
This completes the proof.
\end{proof}

\begin{proposition}\label{LT}
We have:
\begin{align*}
	\mathscr{L}_F(x)
	=
	{\exp(-\lambda) \over 1-\exp(-\lambda)} 
\left\{
\exp[\lambda\exp(-x)]-1
\right\},
	\quad 
	x>0,
\end{align*}
where $\mathscr{L}_F(p)$ is the Laplace transform corresponding to distribution $F_\lambda(x)$ given in \eqref{cdf-expect}.
\end{proposition}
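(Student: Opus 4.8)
The plan is to compute the series defining $\mathscr{L}_F$ directly from the probability mass function \eqref{pmf-def}. Starting from $\mathscr{L}_F(x)=\sum_{k=1}^\infty \exp(-xk)P_\lambda(k)$, I would substitute $P_\lambda(k)={1\over 1-\exp(-\lambda)}\exp(-\lambda)\lambda^k/k!$ and pull the constant factor ${\exp(-\lambda)/(1-\exp(-\lambda))}$ out of the sum, leaving
\begin{align*}
\mathscr{L}_F(x)={\exp(-\lambda)\over 1-\exp(-\lambda)}\sum_{k=1}^\infty {[\lambda\exp(-x)]^k\over k!}.
\end{align*}
The remaining step is to recognize this as the Taylor series of the exponential function evaluated at $\lambda\exp(-x)$, with the $k=0$ term removed, i.e. $\sum_{k=1}^\infty [\lambda\exp(-x)]^k/k!=\exp[\lambda\exp(-x)]-1$, which gives the claimed closed form.

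Since all terms are nonnegative and the series $\sum_{k\geqslant 0} [\lambda\exp(-x)]^k/k!$ converges for every $x>0$ (indeed for all real arguments), there is no subtlety in the rearrangement, and the interchange/identification of the sum is fully justified by absolute convergence of the exponential series.

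Alternatively, I could obtain the result as a limiting case of Proposition \ref{prop-1}: since $H(x,x^*)=\sum_{k=1}^{x^*-1}\exp(-xk)P_\lambda(k)\to \mathscr{L}_F(x)$ as $x^*\to\infty$, and $\Gamma(x^*,\lambda\exp(-x))/\Gamma(x^*)\to 1$ in that same limit (the normalized upper incomplete gamma tends to $1$ as its first argument grows with the second fixed), the expression in Proposition \ref{prop-1} collapses to ${\exp(-\lambda)/(1-\exp(-\lambda))}\{\exp[\lambda\exp(-x)]-1\}$. I expect the direct summation to be the cleaner route, so I would present that and perhaps remark on the consistency with Proposition \ref{prop-1}. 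There is no real obstacle here; the only thing to be careful about is correctly handling the $k=0$ term when passing from the full exponential series to the sum starting at $k=1$.
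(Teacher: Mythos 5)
Your proof is correct. Your primary route---direct summation of $\sum_{k=1}^\infty \exp(-xk)P_\lambda(k)$ and recognition of the exponential series with the $k=0$ term removed---is more elementary and self-contained than the paper's argument, which instead derives the result as a corollary of Proposition \ref{prop-1} via the observation that $\mathscr{L}_F(x)=\lim_{\varepsilon\to\infty}H(x,\varepsilon x^*)$, using $\Gamma(s,a)/\Gamma(s)\to 1$ as $s\to\infty$; that is precisely the alternative you sketch in your last paragraph. The paper's route buys economy by reusing the partial-sum identity already established (and it mirrors the $\varepsilon\to\infty$ limit that reappears in Lemma \ref{lemma-main-1}), while your direct computation avoids any appeal to incomplete gamma asymptotics and makes the convergence justification trivial. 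Either presentation is acceptable; there is no gap in what you wrote.
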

\begin{proof}
%The proof follows from Proposition \ref{prop-1}, by observing that $	\mathscr{L}_F(x)=\lim_{\varepsilon\to\infty}	H(x,\varepsilon x^*)$, for $x^*=1,2,\ldots$.
%
The proof is a direct consequence of Proposition \ref{prop-1}, as we note that the equality $	\mathscr{L}_F(x)=\lim_{\varepsilon\to\infty}	H(x,\varepsilon x^*)$, for $x^*=1,2,\ldots$, holds.
\end{proof}

\begin{proposition}\label{prop-exp}
The following statement is valid:
\begin{align*}
\mathbb{E}&[\exp(-xX^*) H(x,\varepsilon X^*)]
\\[0,2cm]
	&=
	{\exp(-2\lambda)\over 1-\exp(-\lambda)} \,
\exp[\lambda\exp(-x)-x] 
\left\{
\int_{\lambda\exp(-x)}^\infty
\left\{
\sum_{k=0}^{\infty}
{
	t^{\varepsilon (k+1)-1}
	\over\Gamma(\varepsilon (k+1))} \,
{\frac{[\lambda\exp(-x)]^{k}}{k!}} 
\right\}
\exp(-t){\rm d}t
-
1
\right\},
\end{align*}
with $H(x,x^*)$, $x>0$, $x^*=1,2,\ldots$, being as in Proposition \ref{prop-1}.
\end{proposition}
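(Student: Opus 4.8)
The goal is to compute the expectation of $\exp(-xX^*)H(x,\varepsilon X^*)$, where $X^*$ has the size-biased pmf $P^*_\lambda(k)=\exp(-\lambda)\lambda^{k-1}/(k-1)!$ for $k=1,2,\ldots$, and $H$ is the explicit expression from Proposition \ref{prop-1}. The plan is to substitute the closed form for $H$ from Proposition \ref{prop-1} directly and then evaluate the resulting series term by term.

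\textbf{Step 1: Substitute and split.} Writing $a\equiv\lambda\exp(-x)$ for brevity, Proposition \ref{prop-1} gives
\begin{align*}
\exp(-xk)\,H(x,\varepsilon k)
=
{\exp(-\lambda)\over 1-\exp(-\lambda)}\,\exp(-xk)
\left\{
{\exp(a)\,\Gamma(\varepsilon k,a)\over\Gamma(\varepsilon k)}-1
\right\}.
\end{align*}
Taking $\mathbb{E}$ over $X^*=k$ against $P^*_\lambda(k)$, the $-1$ term contributes
$-{\exp(-\lambda)\over 1-\exp(-\lambda)}\sum_{k\geqslant 1}\exp(-xk)P^*_\lambda(k)$, which by the definition of $P^*_\lambda$ telescopes to $-{\exp(-2\lambda)\over 1-\exp(-\lambda)}\exp[\lambda\exp(-x)-x]$ after pulling out $\exp(-\lambda)/(k-1)!$ and recognizing the exponential series in $a$ (with one factor of $a/\lambda=\exp(-x)$ and one factor of $\exp(-x)$ left over). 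This accounts for the trailing $-1$ inside the braces of the claimed formula and for the prefactor ${\exp(-2\lambda)\over 1-\exp(-\lambda)}\exp[\lambda\exp(-x)-x]$.

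\textbf{Step 2: The main term.} For the $\Gamma(\varepsilon k,a)/\Gamma(\varepsilon k)$ term I would use the integral representation $\Gamma(\varepsilon k,a)=\int_a^\infty t^{\varepsilon k-1}\exp(-t)\,\mathrm{d}t$, so that
\begin{align*}
\sum_{k=1}^\infty \exp(-xk)\,{\exp(a)\,\Gamma(\varepsilon k,a)\over\Gamma(\varepsilon k)}\,{\exp(-\lambda)\lambda^{k-1}\over(k-1)!}
=
\exp(a-\lambda)\int_a^\infty\left\{\sum_{k=1}^\infty {t^{\varepsilon k-1}\over\Gamma(\varepsilon k)}\,{[\lambda\exp(-x)]^{k-1}\exp(-x)\over(k-1)!}\right\}\exp(-t)\,\mathrm{d}t,
\end{align*}
where I have pushed $\exp(-xk)=\exp(-x)\cdot\exp(-x(k-1))$ and combined $\lambda^{k-1}\exp(-x(k-1))=[\lambda\exp(-x)]^{k-1}=a^{k-1}$. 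Re-indexing $k\mapsto k+1$ turns the inner sum into $\exp(-x)\sum_{k=0}^\infty t^{\varepsilon(k+1)-1}/\Gamma(\varepsilon(k+1))\cdot a^k/k!$, which is exactly the bracketed sum in the statement. Pulling the common $\exp(-x)$ out front and absorbing $\exp(-\lambda)/(1-\exp(-\lambda))$ times $\exp(-\lambda)$ from the $P^*_\lambda$ normalization into the prefactor yields ${\exp(-2\lambda)\over 1-\exp(-\lambda)}\exp[\lambda\exp(-x)-x]$, matching the $-1$ piece's prefactor and completing the identification.

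\textbf{Main obstacle.} The only real subtlety is justifying the interchange of summation and integration in Step 2 (Tonelli/Fubini), i.e. that the double sum-integral converges absolutely; since all terms are nonnegative this is routine, but one should note that the interchange is where the standing assumption ``the improper integrals involved exist'' is used. The rest is bookkeeping: correctly tracking the factors $\exp(-x)$, $\exp(-\lambda)$, $a=\lambda\exp(-x)$, and the shift of summation index, then matching the two pieces to the two terms inside the braces of the asserted identity.
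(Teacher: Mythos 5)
Your proposal is correct and follows essentially the same route as the paper's proof: substitute the closed form of $H$ from Proposition \ref{prop-1} into the sum over the size-biased pmf, split off the $-1$ piece via the exponential series, and handle the main piece through the integral representation of $\Gamma(\varepsilon k,\cdot)$ with a sum--integral interchange and the reindexing $k\mapsto k+1$. Your explicit Tonelli remark is a small bonus the paper leaves implicit; otherwise the bookkeeping matches the paper's computation exactly.
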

\begin{proof}
By using Proposition \ref{prop-1} and the definition  \eqref{de-cdf-X*} of the  probability mass function of $X^*$, we have
\begin{align}
	&\mathbb{E}[\exp(-xX^*) H(x,\varepsilon X^*)]
	=
	\sum_{k=1}^{\infty}
	\exp(-kx) H(x,\varepsilon k) 
P_{\lambda}^{*}(k) \nonumber
		\\[0,2cm]
	&=
			{\exp(-2\lambda) \over 1-\exp(-\lambda)} 
		\sum_{k=1}^{\infty}
	\exp(-kx) 
	\left\{
	{\exp[\lambda\exp(-x)] \Gamma(\varepsilon k, \lambda\exp(-x))\over\Gamma(\varepsilon k)}-1
	\right\}
	{\frac{\lambda^{k-1}}{(k-1)!}} \nonumber
	\\[0,2cm]
	&=
%	{\exp(-2\lambda) \exp[\lambda\exp(-x)]\exp(-x)  \over 1-\exp(-\lambda)} 
%	\sum_{k=1}^{\infty}
%	{\Gamma(\varepsilon k, [\lambda\exp(-x)])\over\Gamma(\varepsilon k)} \,
%	{\frac{[\lambda\exp(-x)]^{k-1}}{(k-1)!}}
%		\\[0,2cm]
%	&-
%					{\exp(-2\lambda)\exp(-x) \over 1-\exp(-\lambda)} 
%	\sum_{k=1}^{\infty}
%	{\frac{[\lambda\exp(-x)]^{k-1}}{(k-1)!}}
%		\\[0,2cm]
%	&=
	{\exp(-2\lambda)\over 1-\exp(-\lambda)} \,
	\exp[\lambda\exp(-x)-x] 
	\left\{
	\sum_{k=1}^{\infty}
	{\Gamma(\varepsilon k, \lambda\exp(-x))\over\Gamma(\varepsilon k)} \,
	{\frac{[\lambda\exp(-x)]^{k-1}}{(k-1)!}}
	-1
	\right\}, \label{exp-int}
\end{align} 
where  in the last step, we applied a change of variable and utilized the fact that $\sum_{k=0}^{\infty} \exp[-(k+1)x]\lambda^k/k!=\exp[\lambda\exp(-x)-x]$.
Upon invoking the definition $\Gamma(s,x)=\int_x^\infty t^{s-1}\exp(-t){\rm d}t$ of upper incomplete gamma function, the above expression can be written as
\begin{align*}
		&=
	{\exp(-2\lambda)\over 1-\exp(-\lambda)} \,
	\exp[\lambda\exp(-x)-x] 
	\left\{
	\int_{\lambda\exp(-x)}^\infty
	\left\{
		\sum_{k=0}^{\infty}
	{
	 t^{\varepsilon (k+1)-1}
	\over\Gamma(\varepsilon (k+1))} \,
	{\frac{[\lambda\exp(-x)]^{k}}{k!}} 
	\right\}
	\exp(-t){\rm d}t
	-
	1
	\right\}.
\end{align*}

The proof is now finished.
\end{proof}

\begin{lemma}\label{lemma-main-1}
It holds that:
\begin{enumerate}
\item 
\begin{align*}
&\lim_{\varepsilon\to 1}
\mathbb{E}[\exp(-xX^*) H(x,\varepsilon X^*)]
\\[0,2cm]
&=
		{\exp(-2\lambda)\over 1-\exp(-\lambda)} \,
\exp[\lambda\exp(-x)-x] 
\left\{
\exp[\lambda\exp(-x)]
Q_1\left(\sqrt{2\lambda\exp(-x)},\sqrt{2\lambda\exp(-x)}\right)
-1
\right\}.
\end{align*}
\item 
\begin{align*}
	\lim_{\varepsilon\to\infty}
	\mathbb{E}[\exp(-xX^*) H(x,\varepsilon X^*)]
	=
	{\exp(-2\lambda)\over 1-\exp(-\lambda)} \,
	\exp[\lambda\exp(-x)-x] 
	\left\{
	\exp[\lambda\exp(-x)] 
	-1
	\right\}.
\end{align*}
\end{enumerate}
In the above, $Q_{\nu }(a,b)=
a^{1-\nu}\int _{b}^{\infty }x^{\nu }\exp \left[-(x^{2}+a^{2})/{2}\right]I_{\nu -1}(ax){\rm d}x
$ denotes the generalized Marcum Q-function of order $\nu$.
\end{lemma}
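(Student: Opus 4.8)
The plan is to start from the closed form for $\mathbb{E}[\exp(-xX^*) H(x,\varepsilon X^*)]$ established in Proposition \ref{prop-exp}, and to evaluate the two limits $\varepsilon\to 1$ and $\varepsilon\to\infty$ by controlling the inner series
\[
S_\varepsilon(t)
=
\sum_{k=0}^{\infty}
{t^{\varepsilon(k+1)-1}\over\Gamma(\varepsilon(k+1))}\,
{[\lambda\exp(-x)]^{k}\over k!}
\]
inside the integral $\int_{\lambda\exp(-x)}^\infty S_\varepsilon(t)\exp(-t)\,{\rm d}t$. For part 2 this is the easy direction: as $\varepsilon\to\infty$ only the $k=0$ term can survive (for $t$ in the integration range one expects $t^{\varepsilon-1}/\Gamma(\varepsilon)\to 0$ pointwise unless one is careful, so the cleaner route is to go back to \eqref{exp-int} and use Proposition \ref{LT}, i.e.\ $\mathscr{L}_F(x)=\lim_{\varepsilon\to\infty}H(x,\varepsilon x^*)$, together with dominated convergence on the sum over $k$), which collapses the braced expression to $\exp[\lambda\exp(-x)]-1$ and gives the stated formula immediately.

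For part 1, set $a^2=b^2=2\lambda\exp(-x)$, write $c=\lambda\exp(-x)=a^2/2$, and put $\varepsilon=1$ directly in $S_\varepsilon$: then $\Gamma(\varepsilon(k+1))=\Gamma(k+1)=k!$ and
\[
S_1(t)
=
\sum_{k=0}^{\infty}
{t^{k}\over k!}\,{c^{k}\over k!}
=
\sum_{k=0}^\infty {(ct)^k\over (k!)^2}
=
I_0\!\big(2\sqrt{ct}\big).
\]
So the braced quantity becomes $\exp(c)\int_{c}^{\infty} I_0(2\sqrt{ct})\exp(-t)\,{\rm d}t-1$. The crux is therefore the identity
\[
\exp(c)\int_{c}^{\infty} I_0\!\big(2\sqrt{ct}\big)\exp(-t)\,{\rm d}t
=
\exp(c)\,Q_1\!\big(\sqrt{2c},\sqrt{2c}\big),
\]
which I would obtain from the standard integral representation $Q_1(a,b)=a^{0}\int_b^\infty x\exp[-(x^2+a^2)/2]I_0(ax)\,{\rm d}x$ (the $\nu=1$ case of the definition in the statement) by the substitution $t=x^2/2$, so that $x\,{\rm d}x={\rm d}t$, $x^2/2=t$, $a^2/2=c$, $ax=\sqrt{2c}\cdot\sqrt{2t}=2\sqrt{ct}$, and the lower limit $x=b=\sqrt{2c}$ maps to $t=c$; this turns $\int_b^\infty x\exp[-(x^2+a^2)/2]I_0(ax)\,{\rm d}x$ exactly into $\exp(-c)\int_c^\infty I_0(2\sqrt{ct})\exp(-t)\,{\rm d}t$, which rearranges to the claimed identity. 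Substituting back into Proposition \ref{prop-exp} then yields the formula in part 1.

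The main obstacle is the justification of passing the limit $\varepsilon\to 1$ through the integral and the sum in Proposition \ref{prop-exp}: one must check that $S_\varepsilon(t)\exp(-t)$ is dominated uniformly for $\varepsilon$ near $1$ on $t\in[\,c,\infty)$, e.g.\ by bounding $\Gamma(\varepsilon(k+1))$ from below (using $\varepsilon(k+1)\ge$ something like $1$ and log-convexity of $\Gamma$) so that $S_\varepsilon(t)\le C\,\exp(C'\sqrt{ct})$ independently of $\varepsilon$ in a neighbourhood of $1$, which is integrable against $\exp(-t)$; granted the paper's standing assumption that the relevant improper integrals exist, this is routine but is the one place where care is genuinely needed. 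The $\varepsilon\to\infty$ case needs the analogous (easier) domination on the $k$-sum in \eqref{exp-int}, where $\Gamma(\varepsilon k,\lambda e^{-x})/\Gamma(\varepsilon k)\le 1$ provides the bound for free.
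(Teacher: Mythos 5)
Your proposal is correct and follows essentially the same route as the paper: set $\varepsilon=1$ in Proposition \ref{prop-exp} to recognize the inner series as $I_0\big(2\sqrt{ct}\big)$ with $c=\lambda\exp(-x)$, convert the resulting integral into the Marcum function via the quadratic substitution (the paper uses $s=\sqrt{2t}$, the inverse of your $t=x^2/2$), and for $\varepsilon\to\infty$ pass to the limit in \eqref{exp-int} using $\Gamma(\varepsilon k,y)/\Gamma(\varepsilon k)\to 1$. The only blemish is a bookkeeping slip: the braced quantity at $\varepsilon=1$ is $\int_c^\infty I_0\big(2\sqrt{ct}\big)\exp(-t)\,{\rm d}t-1$ (no factor $\exp(c)$), and the crux identity is $\int_c^\infty I_0\big(2\sqrt{ct}\big)\exp(-t)\,{\rm d}t=\exp(c)\,Q_1\big(\sqrt{2c},\sqrt{2c}\big)$ --- exactly what your substitution delivers --- so the stray $\exp(c)$ you carry on both sides cancels and the final formula in Item 1 is unaffected.
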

\begin{proof}
	Taking the limit as $\varepsilon$ approaches 1 in Proposition \ref{prop-exp}, we obtain
	\begin{align*}
	\lim_{\varepsilon\to 1}
	\mathbb{E}[\exp(-x&X^*) H(x,\varepsilon X^*)]
	\\[0,1cm]
	&=
		{\exp(-2\lambda)\over 1-\exp(-\lambda)} \,
	\exp[\lambda\exp(-x)-x] 
	\left\{
	\int_{\lambda\exp(-x)}^\infty
	\left\{
	\sum_{k=0}^{\infty}
	{\frac{[\lambda\exp(-x)t]^{k}}{(k!)^2}} 
	\right\}
	\exp(-t){\rm d}t
	-
	1
	\right\}
		\\[0,2cm]
	&=
	{\exp(-2\lambda)\over 1-\exp(-\lambda)} \,
	\exp[\lambda\exp(-x)-x] 
	\left\{
	\int_{\lambda\exp(-x)}^\infty
	I_0\big(2\sqrt{\lambda\exp(-x)t}\, \big)
	\exp(-t){\rm d}t
	-
	1
	\right\},
	\end{align*}
	where the last equality utilizes the definition of the  modified Bessel function of the first kind of order zero.
Taking the change of variable $s=\sqrt{2t}$, the above expression becomes
	\begin{align*}
		&=
			{\exp(-2\lambda)\over 1-\exp(-\lambda)} \,
		\exp[\lambda\exp(-x)-x] 
		\left\{
		\int_{\sqrt{2\lambda\exp(-x)}}^\infty
		I_0\big(\sqrt{2\lambda\exp(-x)}\, s\big)
		\exp(-s^2/2)
		s{\rm d}s
		-
		1
		\right\}.
	\end{align*}
	By using the definition of generalized Marcum Q-function of order one, the proof of statement in Item 1 follows.

Conversely, taking the limit as $\varepsilon$ approaches infinity in \eqref{exp-int} and using the fact that $\lim_{x\to\infty} \Gamma(x, y)/\Gamma(x) = 1$, we derive
\begin{align*}
	\lim_{\varepsilon\to\infty}
\mathbb{E}[\exp(-xX^*) H(x,\varepsilon X^*)]
=
	{\exp(-2\lambda)\over 1-\exp(-\lambda)} \,
\exp[\lambda\exp(-x)-x] 
\left\{
\sum_{k=1}^{\infty}
{\frac{[\lambda\exp(-x)]^{k-1}}{(k-1)!}}
-1
\right\},
\end{align*}
from which the proof of Item 2 follows.

	This concludes the proof of lemma.
\end{proof}

Having established Lemma \ref{lemma-main-1}, we can now proceed to state and prove the following key theorem.
\begin{theorem}\label{pro-main-1}
The functions $R_1(F)$ and $R_\infty(F)$, defined in Theorem \ref{the-main-2},  satisfy the following identities:
\begin{enumerate}
\item 
\begin{align*}
R_1(F)
=
{\exp(-n\lambda) \over 2[1-\exp(-\lambda)]^{n-1}} 
\int_{0}^{1}
I_0(2\lambda y)
\left[
\exp(\lambda y)-1
\right]^{n-2}
{\rm d}y
+
{\exp(-\lambda)+n-1\over 2n(n-1)\lambda}
-
{\exp(-\lambda)\over (n-1)\lambda}.
\end{align*}
\item 
\begin{align*}
R_\infty(F)
=
%	{\exp(-n\lambda) \over [1-\exp(-\lambda)]^{n-1}} 
%\,
%{[\exp(\lambda)-1]^n\over n\lambda}
%=
{1-\exp(-\lambda)\over n\lambda}.
\end{align*}
\end{enumerate}
In the above, $I_0(z)$	is the modified Bessel function of the first kind of order zero.
\end{theorem}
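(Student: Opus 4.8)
The plan is to substitute the closed forms already established—Proposition~\ref{LT} for $\mathscr{L}_F$ and the two one-sided limits in Lemma~\ref{lemma-main-1}—into the defining integral $R_\varepsilon(F)=\int_0^\infty \mathbb{E}[\exp(-xX^*)H(x,\varepsilon X^*)]\,\mathscr{L}_F^{n-2}(x)\,{\rm d}x$, and then collapse everything to elementary integrals through the change of variables $y=\exp(-x)$. By Proposition~\ref{LT}, $\mathscr{L}_F^{n-2}(x)=\exp(-(n-2)\lambda)(1-\exp(-\lambda))^{-(n-2)}\,(\exp[\lambda\exp(-x)]-1)^{n-2}$, so in both limits a common prefactor $\exp(-n\lambda)(1-\exp(-\lambda))^{-(n-1)}$ factors out of the integrand.

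For $R_\infty(F)$, Item~2 of Lemma~\ref{lemma-main-1} gives an integrand proportional to $\exp[\lambda\exp(-x)-x](\exp[\lambda\exp(-x)]-1)^{n-1}$; the substitution $y=\exp(-x)$ turns the integral into $\int_0^1\exp(\lambda y)(\exp(\lambda y)-1)^{n-1}\,{\rm d}y$, and then $u=\exp(\lambda y)-1$ yields $(\exp(\lambda)-1)^n/(n\lambda)$. Multiplying by the prefactor and using $(\exp(\lambda)-1)^n=\exp(n\lambda)(1-\exp(-\lambda))^n$ collapses this to $(1-\exp(-\lambda))/(n\lambda)$, which is Item~2.

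For $R_1(F)$ I would write $R_1(F)=\exp(-n\lambda)(1-\exp(-\lambda))^{-(n-1)}(I_A-I_B)$, where $I_B=\int_0^\infty\exp[\lambda\exp(-x)-x](\exp[\lambda\exp(-x)]-1)^{n-2}\,{\rm d}x$ is of the same elementary type as above and equals $(\exp(\lambda)-1)^{n-1}/((n-1)\lambda)$, contributing the term $-\exp(-\lambda)/((n-1)\lambda)$ once the prefactor is applied. The term $I_A$ carries the Marcum factor, and here the decisive step is the equal-argument identity for the generalized Marcum $Q$-function, $Q_1(a,a)=\tfrac12\bigl(1+\exp(-a^2)I_0(a^2)\bigr)$ (the case $a=b$ of the symmetry relation $Q_1(a,b)+Q_1(b,a)=1+\exp[-(a^2+b^2)/2]I_0(ab)$), applied with $a^2=2\lambda\exp(-x)$. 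This splits $I_A=\tfrac12(I_{A1}+I_{A2})$ with $I_{A1}=\int_0^\infty\exp[2\lambda\exp(-x)-x](\exp[\lambda\exp(-x)]-1)^{n-2}\,{\rm d}x$ and $I_{A2}=\int_0^\infty\exp(-x)I_0(2\lambda\exp(-x))(\exp[\lambda\exp(-x)]-1)^{n-2}\,{\rm d}x$. The substitution $y=\exp(-x)$ makes $I_{A2}$ equal to $\int_0^1 I_0(2\lambda y)(\exp(\lambda y)-1)^{n-2}\,{\rm d}y$, precisely the Bessel integral in the statement, while $I_{A1}$ is elementary: after $y=\exp(-x)$, then $v=\exp(\lambda y)$ and $w=v-1$, it equals $\tfrac1\lambda\bigl[(\exp(\lambda)-1)^n/n+(\exp(\lambda)-1)^{n-1}/(n-1)\bigr]$.

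Finally I would assemble the three contributions: multiply by the prefactor, reduce the two elementary pieces using $(\exp(\lambda)-1)^m=\exp(m\lambda)(1-\exp(-\lambda))^m$, and combine the constants via $\frac{1-\exp(-\lambda)}{2n\lambda}+\frac{\exp(-\lambda)}{2(n-1)\lambda}=\frac{n-1+\exp(-\lambda)}{2n(n-1)\lambda}$, which reproduces exactly the claimed formula for $R_1(F)$. The only non-routine ingredient is the equal-argument Marcum identity; if one prefers to avoid invoking it, the same conclusion follows by working directly from the integral form $\int_{\lambda\exp(-x)}^\infty I_0(2\sqrt{\lambda\exp(-x)t})\exp(-t)\,{\rm d}t$ that appears inside the proof of Lemma~\ref{lemma-main-1}, together with the elementary evaluation $\int_a^\infty I_0(2\sqrt{at})\exp(-t)\,{\rm d}t=\tfrac12(\exp(a)+\exp(-a)I_0(2a))$, which one obtains by expanding $I_0$ in its power series, integrating termwise with $\Gamma(k+1,a)=k!\exp(-a)\sum_{j=0}^k a^j/j!$, and exploiting the symmetry of the resulting double sum. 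Everything else is change of variables and bookkeeping; I expect the only real source of friction to be keeping the various exponential factors straight when collapsing the powers of $\exp(\lambda)-1$ against the prefactor.
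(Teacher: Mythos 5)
Your proposal is correct and follows essentially the same route as the paper: substitute Lemma \ref{lemma-main-1} and Proposition \ref{LT} into \eqref{def-R-function}, change variables $y=\exp(-x)$, invoke the equal-argument Marcum identity $Q_1(a,a)=\tfrac12[1+\exp(-a^2)I_0(a^2)]$ to split off the Bessel integral, and evaluate the remaining elementary integrals, which reproduces both stated identities (your constant bookkeeping $\tfrac{1-\exp(-\lambda)}{2n\lambda}+\tfrac{\exp(-\lambda)}{2(n-1)\lambda}=\tfrac{n-1+\exp(-\lambda)}{2n(n-1)\lambda}$ matches the paper's closed form). The alternative you sketch via $\int_a^\infty I_0(2\sqrt{at})\exp(-t)\,{\rm d}t=\tfrac12[\exp(a)+\exp(-a)I_0(2a)]$ is just an equivalent, self-contained derivation of that same Marcum identity.
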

\begin{proof}
Utilizing the definition \eqref{def-R-function} of $R_1(F)$ and the result of Lemma \ref{lemma-main-1}, we have
\begin{align*}
R_{1}(F)
&=
\int_{0}^{\infty}
\lim_{\varepsilon\to 1}
\mathbb{E}\left[
\exp\left(-xX^*\right)
H(x,\varepsilon X^*)
\right]
\mathscr{L}^{n-2}_F(x){\rm d}x
\\[0,2cm]
&=
{\exp(-n\lambda) \over [1-\exp(-\lambda)]^{n-1}} 
\int_{0}^{\infty}
\exp[\lambda\exp(-x)-x] 
\\[0,2cm]
&\times 
\left\{
\exp[\lambda\exp(-x)]
Q_1\left(\sqrt{2\lambda\exp(-x)},\sqrt{2\lambda\exp(-x)}\right)
-1
\right\}
\left\{
\exp[\lambda\exp(-x)]-1
\right\}^{n-2}
{\rm d}x
%\\[0,2cm]
%&=
%{\exp(-n\lambda) \over [1-\exp(-\lambda)]^{n-1}} 
%\int_{0}^{1}
%\exp(\lambda y) 
%\left[
%\exp(\lambda y)
%Q_1\left(\sqrt{2\lambda y},\sqrt{2\lambda y}\right)
%-1
%\right]
%\left[
%\exp(\lambda y)-1
%\right]^{n-2}
%{\rm d}y
%\\[0,2cm]
%&=
%{\exp(-n\lambda) \over [1-\exp(-\lambda)]^{n-1}} 
%\int_{0}^{1}
%\exp(2\lambda y)
%Q_1\left(\sqrt{2\lambda y},\sqrt{2\lambda y}\right)
%\left[
%\exp(\lambda y)-1
%\right]^{n-2}
%{\rm d}y
%\\[0,2cm]
%&-
%{\exp(-n\lambda) \over [1-\exp(-\lambda)]^{n-1}} 
%\int_{0}^{1}
%\exp(\lambda y) 
%\left[
%\exp(\lambda y)-1
%\right]^{n-2}
%{\rm d}y
%\\[0,2cm]
%&=
%{\exp(-n\lambda) \over [1-\exp(-\lambda)]^{n-1}} 
%\int_{0}^{1}
%\exp(2\lambda y)
%Q_1\left(\sqrt{2\lambda y},\sqrt{2\lambda y}\right)
%\left[
%\exp(\lambda y)-1
%\right]^{n-2}
%{\rm d}y
%\\[0,2cm]
%&-
%{\exp(-n\lambda) \over [1-\exp(-\lambda)]^{n-1}} 
%	\,
%{[\exp(\lambda)-1]^{n-1}\over (n-1)\lambda}
\\[0,2cm]
&=
{\exp(-n\lambda) \over [1-\exp(-\lambda)]^{n-1}} 
\int_{0}^{1}
\exp(2\lambda y)
Q_1\left(\sqrt{2\lambda y},\sqrt{2\lambda y}\right)
\left[
\exp(\lambda y)-1
\right]^{n-2}
{\rm d}y
%%%
-
{\exp(-\lambda)\over (n-1)\lambda},
\end{align*}
where, in the last line, the change of variable $y = \exp(-x)$ was made.
By using the identity \citep[see][]{Brychkov2012}: $Q_1(a,a)=[\exp(-a^2)I_0(a^2)+1]/2$, the last expression becomes
\begin{align*}
&=
{\exp(-n\lambda) \over 2[1-\exp(-\lambda)]^{n-1}} 
\int_{0}^{1}
I_0(2\lambda y)
\left[
\exp(\lambda y)-1
\right]^{n-2}
{\rm d}y
\\[0,2cm]
&+
{\exp(-n\lambda) \over 2[1-\exp(-\lambda)]^{n-1}} 
\int_{0}^{1}
\exp(2\lambda y)
\left[
\exp(\lambda y)-1
\right]^{n-2}
{\rm d}y
-
{\exp(-\lambda)\over (n-1)\lambda}
\\[0,2cm]
&=
{\exp(-n\lambda) \over 2[1-\exp(-\lambda)]^{n-1}} 
\int_{0}^{1}
I_0(2\lambda y)
\left[
\exp(\lambda y)-1
\right]^{n-2}
{\rm d}y
\\[0,2cm]
&+
{\exp(-n\lambda) \over 2[1-\exp(-\lambda)]^{n-1}}  \,
{[\exp(\lambda)-1]^{n-1}[(n-1)\exp(\lambda)+1]\over n(n-1)\lambda}
-
{\exp(-\lambda)\over (n-1)\lambda}.
%\\[0,2cm]
%&=
%{\exp(-n\lambda) \over 2[1-\exp(-\lambda)]^{n-1}} 
%\int_{0}^{1}
%I_0(2\lambda y)
%\left[
%\exp(\lambda y)-1
%\right]^{n-2}
%{\rm d}y
%%
%+
%{\exp(-\lambda)+n-1\over 2n(n-1)\lambda}
%-
%{\exp(-\lambda)\over (n-1)\lambda}
\end{align*}
This concludes the proof of Item 1.

On the other hand, using the definition of $R_\infty(F)$ (see Theorem \ref{the-main-2}) and the result of Lemma \ref{lemma-main-1}, we obtain
\begin{align*}
	R_{\infty}(F)
&=
	\int_{0}^{\infty}
	\lim_{\varepsilon\to \infty}
	\mathbb{E}\left[
	\exp\left(-xX^*\right)
	H(x,\varepsilon X^*)
	\right]
	\mathscr{L}^{n-2}_F(x){\rm d}x
	\\[0,2cm]
	&=
	{\exp(-n\lambda) \over [1-\exp(-\lambda)]^{n-1}} 
	\int_{0}^{\infty}
	\exp[\lambda\exp(-x)-x] 
	\left\{
	\exp[\lambda\exp(-x)]-1
	\right\}^{n-1}
	{\rm d}x
		\\[0,2cm]
	&=
%	{\exp(-n\lambda) \over [1-\exp(-\lambda)]^{n-1}} 
%	\int_{0}^{1}
%	\exp(\lambda y) 
%	\left[
%	\exp(\lambda y)-1
%	\right]^{n-1}
%	{\rm d}y
%			\\[0,2cm]
%	&=
	{\exp(-n\lambda) \over [1-\exp(-\lambda)]^{n-1}} 
	\,
	{[\exp(\lambda)-1]^n\over n\lambda}.
%	=
%	{1-\exp(-\lambda)\over n\lambda}
\end{align*}
Then the proof of Item 2 is readily obtained.
\end{proof}

\begin{theorem}\label{pro-main-2}
		The following holds:
\begin{align*}
	\mathbb{E}
	\left[g(X^*,X)\mathds{1}_{\{X=X^*\}}\right]
	=
		{\exp(-n\lambda) \over [1-\exp(-\lambda)]^{n-1}} 
\int_{0}^{1}
I_1(2\lambda y)
\left[
\exp(\lambda y)-1
\right]^{n-2}
{\rm d}y,
\end{align*}
where ${g}(u,v)$ is as given in \eqref{def-g} and $I_1(z)$	is the modified Bessel function of the first kind of order one.
\end{theorem}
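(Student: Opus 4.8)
The plan is to evaluate the expectation directly from the definitions. Since the indicator $\mathds{1}_{\{X=X^*\}}$ forces $X=X^*=k$ for some integer $k\geqslant 1$, and since $g(u,v)$ depends only on the sum $u+v$, on this event one has $g(X^*,X)=g(k,k)=\int_{0}^{\infty}\exp(-2kx)\mathscr{L}^{n-2}_F(x){\rm d}x$. Using the independence of $X$ and $X^*$ together with the probability mass functions in \eqref{pmf-def} and \eqref{de-cdf-X*}, and the identity $P_{\lambda}(k)P_{\lambda}^{*}(k)=\exp(-2\lambda)\lambda^{2k-1}/\{[1-\exp(-\lambda)]\,k!\,(k-1)!\}$, I would write
\begin{align*}
\mathbb{E}\left[g(X^*,X)\mathds{1}_{\{X=X^*\}}\right]
=\sum_{k=1}^{\infty} g(k,k)\,P_{\lambda}(k)P_{\lambda}^{*}(k)
={\exp(-2\lambda)\over 1-\exp(-\lambda)}\int_{0}^{\infty}\left[\sum_{k=1}^{\infty}\exp(-2kx)\,{\lambda^{2k-1}\over k!\,(k-1)!}\right]\mathscr{L}^{n-2}_F(x){\rm d}x,
\end{align*}
after interchanging the nonnegative sum and the integral by Tonelli's theorem.

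The second step is to recognize the inner series as a Bessel function. Setting $u=\lambda\exp(-x)$ and factoring out $\lambda^{-1}$, one has $\sum_{k=1}^{\infty}\exp(-2kx)\lambda^{2k-1}/[k!(k-1)!]=\lambda^{-1}\sum_{k=1}^{\infty}u^{2k}/[k!(k-1)!]=\lambda^{-1}u\,I_1(2u)=\exp(-x)\,I_1\big(2\lambda\exp(-x)\big)$, where the middle identity is just the defining series $I_1(2u)=\sum_{k=0}^{\infty}u^{2k+1}/[k!(k+1)!]$ re-indexed by $j=k+1$. This reduces the expectation to
\begin{align*}
\mathbb{E}\left[g(X^*,X)\mathds{1}_{\{X=X^*\}}\right]
={\exp(-2\lambda)\over 1-\exp(-\lambda)}\int_{0}^{\infty}\exp(-x)\,I_1\big(2\lambda\exp(-x)\big)\,\mathscr{L}^{n-2}_F(x){\rm d}x.
\end{align*}

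Third, I would substitute the closed form $\mathscr{L}_F(x)=\exp(-\lambda)\{\exp[\lambda\exp(-x)]-1\}/[1-\exp(-\lambda)]$ from Proposition \ref{LT}; raising it to the power $n-2$ contributes a factor $\exp[-(n-2)\lambda]/[1-\exp(-\lambda)]^{n-2}$ together with $\{\exp[\lambda\exp(-x)]-1\}^{n-2}$, and combining this with the prefactor produces $\exp(-n\lambda)/[1-\exp(-\lambda)]^{n-1}$ out front. Finally, the change of variable $y=\exp(-x)$ (so $\exp(-x){\rm d}x=-{\rm d}y$ and the limits $x:0\to\infty$ become $y:1\to 0$) turns the remaining integral into $\int_{0}^{1}I_1(2\lambda y)[\exp(\lambda y)-1]^{n-2}{\rm d}y$, which is exactly the asserted expression.

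The only genuine subtlety is justifying the term-by-term integration and the existence of the improper integral; both are covered by the nonnegativity of all the factors involved (so Tonelli's theorem applies unconditionally) together with the standing assumption in Theorem \ref{the-main-2} that the relevant integrals converge. Everything else is bookkeeping; the one nonroutine move is the identification of the power series in the second step with $I_1$, which parallels the appearance of $I_0$ in Lemma \ref{lemma-main-1} and in the proof of Theorem \ref{pro-main-1}.
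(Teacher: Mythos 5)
Your proposal is correct and follows essentially the same route as the paper: expand the expectation as $\sum_{k\geqslant 1}g(k,k)P_{\lambda}(k)P_{\lambda}^{*}(k)$, use Proposition \ref{LT} for $\mathscr{L}_F$, recognize the series $\sum_{k\geqslant 1}u^{2k}/[k!(k-1)!]=uI_1(2u)$, and finish with the substitution $y=\exp(-x)$. The only difference is cosmetic ordering --- the paper rewrites $g(k,k)$ via the change of variable before summing, while you interchange sum and integral first and change variables at the end --- with the Tonelli justification you give covering the same interchange the paper performs implicitly.
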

\begin{proof}
%By making the change of variable $y=\exp(-x)$, we can write $g(k,k)$ as
By using Proposition \ref{LT} and then by introducing the change of variable $y=\exp(-x)$, we obtain an alternative representation of $g(k,k)$ in \eqref{def-g}  as
\begin{align*}
	{g}(k,k)
=
	\int_{0}^{\infty}
	\exp(-2kx) \mathscr{L}^{n-2}_F(x)
	{\rm d}x
	&=
	{\exp[-(n-2)\lambda] \over [1-\exp(-\lambda)]^{n-2}} 
	\int_{0}^{\infty}
\exp(-2kx) 
\left\{
\exp[\lambda\exp(-x)]-1
\right\}^{n-2}
{\rm d}x
		\\[0,2cm]
	&=
	{\exp[-(n-2)\lambda] \over [1-\exp(-\lambda)]^{n-2}} 
	\int_{0}^{1}
y^{2k-1} 
\left[
\exp(\lambda y)-1
\right]^{n-2}
{\rm d}y.
\end{align*}
%where in the last line we have used the integral representation of the hypergeometric function. 
Consequently, by using the above representation of $g(k,k)$, we observe that
%\begin{align*}
%	\mathbb{E}
%	\left[g(X^*,X)\mathds{1}_{\{X=X^*\}}\right]
%	&=
%	\sum_{k=1}^{\infty}
%	g(k,k)
%	f(k;p)
%	\mathbb{P}(X^*=k)
%	\\[0,2cm]
%	&=
%	{p^{n+1}\over (1-p)^2}
%	\sum_{k=1}^{\infty}
%	{k\over n+2k-2}\,
%	_2F_1\left(n-2,n+2k-2;n+2k-1;1-p\right)
%	(1-p)^{2k}.
%\end{align*}
%
\begin{align*}
	\mathbb{E}
	\left[g(X^*,X)\mathds{1}_{\{X=X^*\}}\right]
	&=
	\sum_{k=1}^{\infty}
	g(k,k)
	P_\lambda(k)
	P_\lambda^*(k)
	\\[0,2cm]
	&=
	{\exp(-n\lambda) \over [1-\exp(-\lambda)]^{n-1}} 
\int_{0}^{1}
\left[
\exp(\lambda y)-1
\right]^{n-2}
\left[
	\sum_{k=1}^{\infty}
{\frac{(\lambda y)^{2k-1}}{(k-1)!k!}}
\right]
{\rm d}y
		\\[0,2cm]
	&=
		{\exp(-n\lambda) \over [1-\exp(-\lambda)]^{n-1}} 
	\int_{0}^{1}
	I_1(2\lambda y)
	\left[
	\exp(\lambda y)-1
	\right]^{n-2}
	{\rm d}y,
\end{align*}
where the last step uses the definition of the modified Bessel function of the first kind of order one.

This concludes the proof.
\end{proof}

\subsection{The Gini coefficient estimator bias} \label{The Gini coefficient estimator bias}

As $\mu=\mathbb{E}(X)={\lambda}/{[1-\exp(-\lambda)]}$ for $X\sim\text{ZTP}(\lambda)$, by applying Theorem \ref{the-main-2} of \cite{Vila-Saulo2025}, we have
\begin{align*}
		\mathbb{E}(\widehat{G})
&=
n
\mu
\left\{
2R_{1}(F)
-
R_{\infty}(F)		
+
\mathbb{E}
\left[g(X^*,X)\mathds{1}_{\{X=X^*\}}\right]
\right\},
\end{align*}
where $R_{1}(F), R_{\infty}(F)$ and $g(u,v)$ are defined in Theorem \ref{the-main-2}.
Hence, from Theorems \ref{pro-main-1} and \ref{pro-main-2} it follows that
%
%
%
%
%\begin{align*}
%\mathbb{E}(\widehat{G})
%&=.
%\end{align*}
%
%
%\begin{align*}
%{p\over n}
%-
%{2p^{n+1}\over n} \,
%F_1(n;n+1,2;n+1;1-p,p-1)
%+
%{p^{n+1}\over n} \,	
%F_1(n;n-1,1;n+1;1-p,p-1)
%\end{align*}
%
\begin{align}\label{expect-Gini}
	\mathbb{E}(\widehat{G})
	=
	{n\lambda\exp(-n\lambda) \over [1-\exp(-\lambda)]^{n}} 
\int_{0}^{1}
[I_0(2\lambda y)+I_1(2\lambda y)]
\left[
\exp(\lambda y)-1
\right]^{n-2}
{\rm d}y
-
{n\exp(-\lambda)\over (n-1)[1-\exp(-\lambda)]}.
\end{align}

Therefore, by \eqref{expect-Gini} and Proposition \ref{prop-gini-coeff}, the bias of $\widehat{G}$ relative to $G$, denoted by $\text{Bias}(\widehat{G},G)$, can be written
as
\begin{align}\label{bias-gini-geometric}
	\text{Bias}(\widehat{G},G)
	&=
	{n\lambda\exp(-n\lambda) \over [1-\exp(-\lambda)]^{n}} 
\int_{0}^{1}
[I_0(2\lambda y)+I_1(2\lambda y)]
\left[
\exp(\lambda y)-1
\right]^{n-2}
{\rm d}y
-
{n\exp(-\lambda)\over (n-1)[1-\exp(-\lambda)]}
\nonumber
\\[0,2cm]
&
+
{2\exp(-\lambda)\over 1-\exp(-\lambda)} \,
\int_0^\lambda I_0(2\sqrt{\lambda t}) \exp(-t) {\rm d}t
-
{\exp(-2\lambda)\over 1-\exp(-\lambda)} \,
I_1(2\lambda)
-
1.
\end{align}

\begin{remark}
It is worth noting that the integrals in \eqref{expect-Gini} and \eqref{bias-gini-geometric} cannot be expressed in terms of standard mathematical functions, but they can be easily evaluated numerically.
\end{remark}

\section{Illustrative simulation study}\label{sec:04}

Note that a bias-corrected Gini estimator can then be proposed from \eqref{gini-estimadtor-def} and \eqref{bias-gini-geometric} as
\begin{align}\label{Gini-GM-BC} \nonumber
\widehat{G}_{\text{bc}} &= \widehat{G} - \text{Bias}(\widehat{G}, G)
\\[0,2cm] 
\nonumber
&=	{1\over n-1}
	\left[\dfrac{\displaystyle\sum_{1\leqslant i<j\leqslant n}\vert X_i-X_j\vert}{\displaystyle\sum_{i=1}^{n} X_i}\right]
	\\[0,2cm] \nonumber
&-\left\{
	{n\widehat{\lambda}\exp(-n\widehat{\lambda}) \over [1-\exp(-\widehat{\lambda})]^{n}}
\int_{0}^{1}
[I_0(2\widehat{\lambda} y)+I_1(2\widehat{\lambda} y)]
\left[
\exp(\widehat{\lambda} y)-1
\right]^{n-2}
{\rm d}y
-
{n\exp(-\widehat{\lambda})\over (n-1)[1-\exp(-\widehat{\lambda})]} \right.
\nonumber
\\[0,2cm]
&
\left.+
{2\exp(-\widehat{\lambda})\over 1-\exp(-\widehat{\lambda})} \,
\int_0^{\widehat{\lambda}} I_0(2\sqrt{\widehat{\lambda} t}) \exp(-t) {\rm d}t
-
{\exp(-2\widehat{\lambda})\over 1-\exp(-\widehat{\lambda})} \,
I_1(2\widehat{\lambda})
-
1\right\},
	\quad
	n\in\mathbb{N}, \, n\geqslant 2,
\end{align}
where hat notation on $\lambda$ denotes the maximum likelihood estimator. We evaluate the performance of the standard \eqref{gini-estimadtor-def} and bias-corrected \eqref{Gini-GM-BC} Gini coefficient estimators in the context of ZTP populations through a Monte Carlo simulation study. The study considers four values of the ZTP parameter $\lambda \in \{0.1, 0.5, 1.0, 2.0\}$ and four different sample sizes $n \in \{5, 10, 30, 50\}$. For each configuration, we perform $1,000$ replications. In each simulation, data is generated from a ZTP($\lambda$) distribution, the parameter $\lambda$ is estimated via maximum likelihood, and both the standard and bias-corrected Gini coefficient estimators are computed. The steps of the Monte Carlo simulation study are described in Algorithm 1.

\begin{algorithm}[!ht]\label{algo}
\caption{Monte Carlo simulation for bias-corrected Gini estimator under the ZTP model.}
\label{algorithm:ztp}
\begin{algorithmic}[1]
\State \textbf{Input:} Number of simulations $N_{\text{sim}} = 1000$; sample sizes $n \in \{5, 10, 30, 50\}$;
ZTP parameter $\lambda \in \{0.1, 0.5, 1.0, 2.0\}$.
\State \textbf{Output:} Estimates of the Gini coefficient $\widehat{G}$ and its bias-corrected version $\widehat{G}_{\text{bc}}$.

\For{each $\lambda \in \{0.1, 0.5, 1.0, 2.0\}$}
  \State Compute the true Gini coefficient $G$ for ZTP($\lambda$) using the formula~\eqref{gini:pop}.
  \For{each sample size $n \in \{5, 10, 30, 50\}$}
    \For{each simulation run $s = 1, \dots, N_{\text{sim}}$}
        \State \textbf{Step 1: Generate data}
        \State Simulate $X_1, \dots, X_n \sim \text{ZTP}(\lambda)$.
        \State Estimate $\widehat{\lambda}$ via maximum likelihood.
        \State \textbf{Step 2: Compute Gini estimates}
        \State Compute empirical Gini coefficient $\widehat{G}$ using~\eqref{gini-estimadtor-def}.
        \State Compute the bias-corrected estimator using~\eqref{Gini-GM-BC}, that is,
        \[
        \widehat{G}_{\text{bc}} = \widehat{G} - \text{Bias}(\widehat{G}, G).
        \]
    \EndFor
    \State \textbf{Step 3: Compute Monte Carlo summaries}
   \State Compute relative bias and mean squared error (MSE) for $\widehat{G}$ (similarly for $\widehat{G}_{\text{bc}}$) using:
\[
\widehat{\text{Relative Bias}} = \left| \frac{\frac{1}{N_{\text{sim}}} \sum_{k=1}^{N_{\text{sim}}} \widehat{G}^{(k)} - G}{G} \right|, \quad
\widehat{\text{MSE}} = \frac{1}{N_{\text{sim}}} \sum_{k=1}^{N_{\text{sim}}} \big[\widehat{G}^{(k)} - G \big]^2,
\]
where $N_{\text{sim}}$ is the number of Monte Carlo replications and $G$ is the true Gini coefficient given in ~\eqref{gini:pop}.

  \EndFor
\EndFor
\State \textbf{Return:} Summary tables with relative bias and MSE for all configurations.
\end{algorithmic}
\end{algorithm}

Figures~\ref{fig:bias_n}-\ref{fig:mse_n} present the Monte Carlo simulation results of the standard and bias-corrected Gini estimators across different combinations of the ZTP parameter $\lambda$ and sample sizes $n$. Figure~\ref{fig:bias_n} presents the relative bias for both estimators. From this figure, we observe that the standard estimator $\widehat{G}$ exhibits relative bias that is particularly pronounced for small sample sizes and smaller values of $\lambda$. In contrast, the bias-corrected estimator $\widehat{G}_{\text{bc}}$ significantly reduces the relative bias in all scenarios, thereby demonstrating the effectiveness of the proposed correction. Figure~\ref{fig:mse_n} presents the mean squared error (MSE) for both estimators. In general, both estimators exhibit similar MSE values across the configurations considered.

\begin{figure}[!ht]
  \centering
  \includegraphics[width=0.75\textwidth]{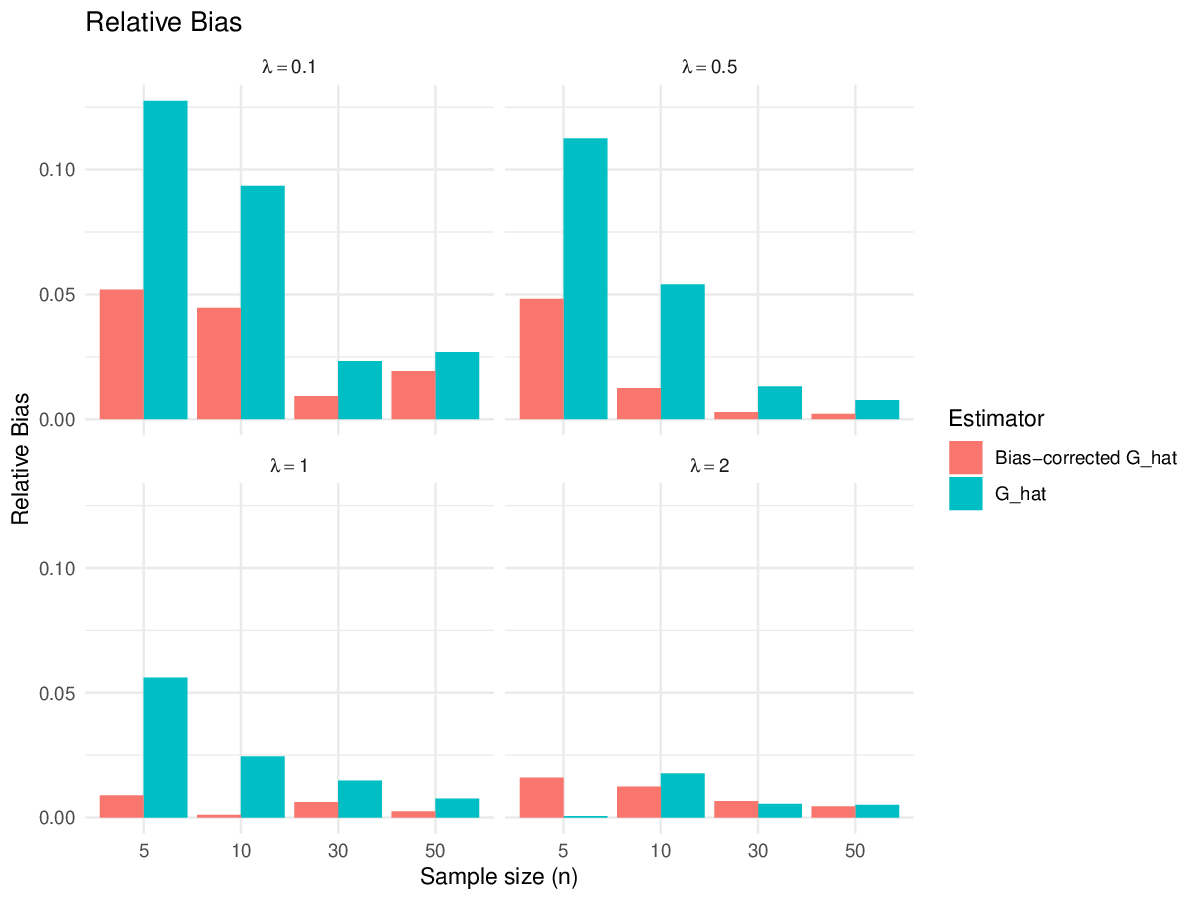}
  \caption{Relative bias of Gini estimators vs. sample size $n$, for different values of $\lambda$.}
  \label{fig:bias_n}
\end{figure}

\begin{figure}[!ht]
  \centering
  \includegraphics[width=0.75\textwidth]{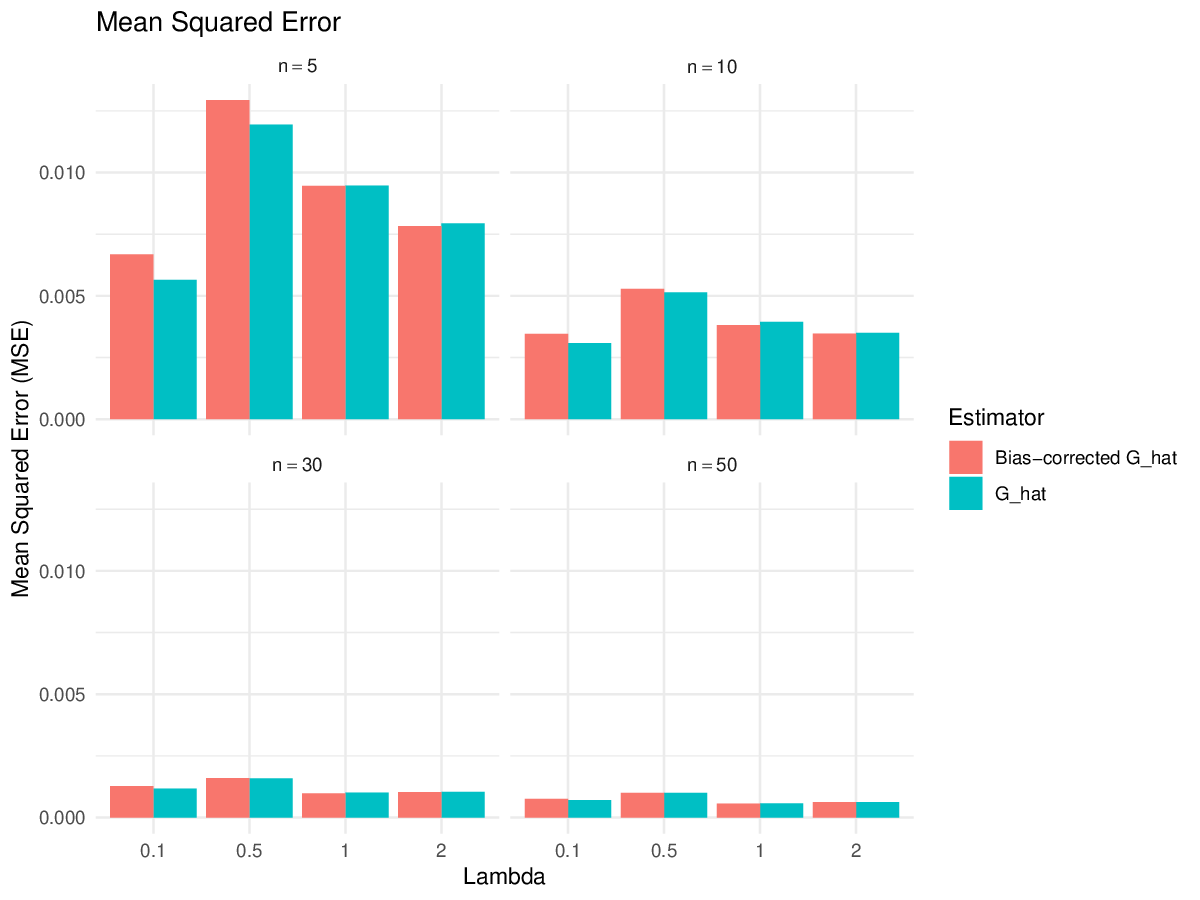}
  \caption{Mean squared error of Gini estimators vs. sample size $n$, for different values of $\lambda$.}
  \label{fig:mse_n}
\end{figure}

%\clearpage

\section{Concluding remarks}\label{sec:05}

In this paper, we analyzed the statistical properties of the Gini coefficient estimator under zero-truncated Poisson populations. Although the Gini coefficient is a widely used inequality measure, its sample estimator exhibits bias in discrete and truncated settings such as the zero-truncated Poisson distribution. We derived a simple expression for the  bias and proposed a bias-corrected estimator that accounts for both the truncation and the finite-sample behavior. The effectiveness of the proposed correction was demonstrated through a Monte Carlo simulation study, covering a wide range of parameter values and sample sizes. The results showed that the proposed bias-corrected estimator $\widehat{G}_{\text{bc}}$ significantly reduces this bias. These findings are particularly relevant in fields such as epidemiology, where the Gini coefficient for zero-truncated Poisson populations can be used to quantify heterogeneity in the offspring distribution of infectious diseases. As part of future work, we may explore bias correction methods for other truncated or discrete distributions, as well as extensions to multivariate Gini measures and regression frameworks involving inequality indices. Work on these problems is currently in progress and we hope to report these findings in future.

%\clearpage
\clearpage

\paragraph*{Acknowledgements}
The research was supported in part by CNPq and CAPES grants from the Brazilian government.

\paragraph*{Disclosure statement}
There are no conflicts of interest to disclose.

%%%%%%%%%%%%%%%%%%%%%%%%%%%%%%%%%%%%%%%%%%%%%%%%%%%%%%%%%%%%%

%%\bibliographystyle{apalike}
%%\bibliography{references}

\end{document}